\newtheorem{theorem}{Theorem}[section]
\newtheorem{lemma}[theorem]{Lemma}
\newtheorem{definition}[theorem]{Definition}
\newtheorem{observation}[theorem]{Observation}
\begin{document}

\newcommand{\Oo}{\mathcal{O}}
\newcommand{\Ohstar}{\Oo^\star}
\newcommand{\eps}{\varepsilon}
\newcommand{\vissets}{{\rm{deg2sets}}}
\newcommand{\pathsets}{{\rm{path}}}

\newcommand{\defproblem}[4]{
  \vspace{2mm}
\noindent\fbox{
  \begin{minipage}{0.96\textwidth}
  \begin{tabular*}{0.96\textwidth}{@{\extracolsep{\fill}}lr} #1 & {\bf{Parameter:}} #3 \\ \end{tabular*}
  {\bf{Input:}} #2  \\
  {\bf{Question:}} #4
  \end{minipage}
  }
  \vspace{2mm}
}
\newcommand{\defnoparamproblem}[3]{
  \vspace{2mm}
\noindent\fbox{
  \begin{minipage}{0.96\textwidth}
  #1
  {\bf{Input:}} #2  \\
  {\bf{Question:}} #3
  \end{minipage}
  }
  \vspace{2mm}
}

\newcommand{\Vop}[2]{\ensuremath{V_{\deg#1#2}}}
\newcommand{\Veq}[1]{\Vop{=}{#1}}
\newcommand{\Vgeq}[1]{\Vop{\geq}{#1}}
\newcommand{\Vleq}[1]{\Vop{\leq}{#1}}
\newcommand{\Vgt}[1]{\Vop{>}{#1}}

\newcommand{\bipconst}{3.55}

\title{Faster exponential-time algorithms in graphs of bounded average degree\thanks{Partially supported by NCN grant N206567140 and Foundation for Polish Science.}}

\author{
  Marek Cygan\thanks{Institute of Informatics, University of Warsaw, Poland, \texttt{cygan@mimuw.edu.pl}}
  \and
  Marcin Pilipczuk\thanks{Institute of Informatics, University of Warsaw, Poland, \texttt{malcin@mimuw.edu.pl}}
}

\maketitle

\begin{abstract}
We first show that the Traveling Salesman Problem in an $n$-vertex graph with average degree bounded
by $d$ can be solved in $\Ohstar(2^{(1-\eps_d)n})$ time\footnote{The $\Ohstar$-notation suppresses factors
polynomial in the input size.} and exponential space
for a constant $\eps_d$ depending only on $d$.
Thus, we generalize the recent results of Bj\"{o}rklund et al. [TALG 2012] on graphs of bounded degree. 

Then, we move to the problem of counting perfect matchings in a graph.
We first present a simple algorithm for counting perfect matchings in an $n$-vertex graph
in $\Ohstar(2^{n/2})$ time and polynomial space; our algorithm matches the complexity bounds
of the algorithm of Bj\"orklund [SODA 2012], but relies on inclusion-exclusion principle instead
of algebraic transformations. Building upon this result, we show that the number of perfect matchings
in an $n$-vertex graph with average degree bounded by $d$ can be computed
in $\Ohstar(2^{(1-\eps_{2d})n/2})$ time and exponential space, where $\eps_{2d}$ is the constant
obtained by us for the Traveling Salesman Problem in graphs of average degree at most $2d$.

Moreover we obtain a simple algorithm that counts the number of
perfect matchings in an $n$-vertex {\em{bipartite}} graph of average degree at most $d$
in $\Ohstar(2^{(1-1/(\bipconst d))n/2})$ time, improving and simplifying
the recent result of Izumi and Wadayama [FOCS 2012].
\end{abstract}

\section{Introduction}

Improving upon the 50-years old $\Ohstar(2^n)$-time dynamic programming algorithms
for the Traveling Salesman Problem by Bellman \cite{bellman} and Held and Karp \cite{held-karp}
is a major open problem in the field of exponential-time algorithms \cite{woeginger01}.
A similar situation appears when we want to count perfect matchings in the graph:
a half-century old $\Ohstar(2^{n/2})$-time algorithm of Ryser for bipartite graphs \cite{ryser}
has only recently been transferred to arbitrary graphs \cite{bjorklund-matchings},
and breaking these time complexity barriers seems like a very challenging task.

From a broader perspective, improving upon a trivial brute-force or a simple dynamic programming algorithm
is one of the main goals the field of exponential-time algorithms.
Although the last few years brought a number of positive results in that direction, most notably
the $\Ohstar(1.66^n)$ randomized algorithm for finding a Hamiltonian cycle in an undirected graph
\cite{hamilton}, it is conjectured (the so-called Strong Exponential Time Hypothesis \cite{seth}) that the
problem of satisfying a general CNF-SAT formulae does not admit any exponentially better algorithm than the trivial brute-force one.
A number of lower bounds were proven using this assumption \cite{ccc,lms:tw,patrascu-williams}.

In 2008 Bj\"{o}rklund et al. \cite{tsp-deg} observed that the classical dynamic programming
algorithm for TSP can be trimmed to running time $\Ohstar(2^{(1-\eps_\Delta)n})$ in
graphs of maximum degree $\Delta$. The cost of this improvement is the use of exponential
space, as we can no longer easily translate the dynamic programming algorithm into an inclusion-exclusion
formula. The ideas from \cite{tsp-deg} were also applied to the Fast Subset Convolution algorithm,
yielding a similar improvements for the problem of computing the chromatic number in graphs
of bounded degree \cite{trimmed-fsc}.
In this work, we investigate the class of graphs of bounded {\em{average}}
degree, a significantly broader graph class than this of bounded maximum degree.

In the first part of our paper we generalize the results of \cite{tsp-deg}.
\begin{theorem}\label{thm:tsp}
For every $d \geq 1$ there exists a constant $\eps_d > 0$
such that, given an $n$-vertex graph $G$ of average degree bounded by $d$,
in $\Ohstar(2^{(1-\eps_d)n})$ time and exponential space
one can find in $G$ a smallest weight Hamiltonian cycle.
\end{theorem}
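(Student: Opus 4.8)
The plan is to follow the trimming strategy of Bj\"{o}rklund et al.\ \cite{tsp-deg}, but to exploit the average-degree bound rather than a maximum-degree bound. Recall that the classical Bellman--Held--Karp dynamic program fixes a start vertex $v_0$ and computes, for every subset $S \ni v_0$ and every endpoint $u \in S$, the minimum weight of a Hamiltonian path in $G[S]$ from $v_0$ to $u$. The running time is dominated by the number of reachable (set, endpoint) states. In the bounded-maximum-degree setting, \cite{tsp-deg} observe that a state $(S,u)$ is useless unless $S$ induces a connected subgraph containing $v_0$ and $u$, and more importantly that one may discard states for which the induced subgraph leaves ``too many'' vertices untouched; the savings come from the fact that low-degree vertices cannot all be active boundary vertices simultaneously. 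First I would recall and restate this trimming lemma in a form that only needs a bound on the \emph{number} of high-degree vertices, not on every individual degree.

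The key observation is that in a graph of average degree at most $d$, the number of vertices of degree exceeding any fixed threshold $D$ is at most $dn/D$, so a $(1-\delta)$-fraction of vertices have degree at most $D$ for appropriate constants. Thus I would partition $V(G)$ into a large set $L$ of low-degree vertices (degree $\le D$) and a small set $H$ of high-degree vertices, with $|H| \le dn/D$. The plan is then to run the trimmed dynamic program so that the exponential savings are harvested on the induced subgraph $G[L]$, which has bounded maximum degree, while the at most $dn/D$ high-degree vertices are handled by brute force over their (polynomially many, or at worst $2^{|H|}$) interactions. Concretely, one guesses the behaviour of the Hamiltonian cycle on $H$ and its incident edges, and applies the degree-based trimming of \cite{tsp-deg} to the remaining low-degree part; choosing $D$ as a suitable constant depending on $d$ balances the $2^{|H|} \le 2^{dn/D}$ brute-force cost against the $2^{(1-\eps_D)n}$ cost of the trimmed program on $L$, yielding an overall bound $\Ohstar(2^{(1-\eps_d)n})$ for a constant $\eps_d>0$.

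I would carry the argument out in the following order. First, establish the high/low degree partition and fix the threshold $D=D(d)$ at the end once the trade-off is clear. Second, adapt the trimming lemma of \cite{tsp-deg} to bound the number of dynamic-programming states when restricted to $G[L]$, so that the state count is $\Ohstar(2^{(1-\eps_D)|L|})$. Third, show how to incorporate the high-degree vertices: since there are few of them, enumerating the cyclic structure they impose contributes only a factor $2^{O(|H|)}$, which is subexponential in $n$ once $D$ is large relative to $d$. Fourth, optimize the constants to certify $\eps_d>0$, and verify that the whole scheme respects the claimed weighted (``smallest weight Hamiltonian cycle'') requirement, which it does because the dynamic program already carries weights.

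The main obstacle I expect is the interface between the high-degree and low-degree parts: the trimming savings of \cite{tsp-deg} rely on \emph{every} relevant vertex having bounded degree, so I must ensure that the presence of a few high-degree vertices does not destroy the structural invariant that powers the trimming. The delicate point is to set up the dynamic-programming states so that high-degree vertices appear only as ``guessed'' transit points and never as the boundary/endpoint vertices whose sparsity is being exploited; otherwise a single high-degree vertex on the path boundary could reintroduce the full $2^n$ branching. Getting this decomposition clean—so that the recurrence still computes a genuine minimum-weight Hamiltonian cycle while the exponential base is governed only by the low-degree part—is where the real work lies, and where I would spend most of the effort verifying correctness and the precise dependence of $\eps_d$ on $d$.
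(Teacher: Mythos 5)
You follow the same skeleton as the paper --- fix a degree threshold $D$, harvest the Bj\"orklund et al.\ trimming on the low-degree part, and argue that the high-degree vertices are too few to matter --- and you correctly locate the crux at the interface between the two parts. But your proposal rests on exactly the bound that the paper points out is insufficient: the simple averaging bound $|H|\le dn/D$. This creates a genuine gap, visible in two ways. First, the arithmetic of your balancing step cannot work: the trimming constant of \cite{tsp-deg} for maximum degree $D$ is exponentially small in $D$ (of order $2^{-D}/D$), so the inequality you need, namely $\eps_D > d/D$ (to absorb a $2^{dn/D}$ overhead on top of $2^{(1-\eps_D)n}$), fails for \emph{every} choice of $D$ once $d\ge 1$; there is no constant $D(d)$ at which an exponentially-small-in-$D$ gain beats a polynomially-small-in-$D$ cost. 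Second, even under the more favorable ``factorized'' accounting $2^{|H|}\cdot 2^{(1-\eps_D)|L|}$, the factorization itself is unjustified. The trimming savings are not proportional to $|L|$: they come from a set $A$ of low-degree vertices with pairwise disjoint closed neighbourhoods, of guaranteed size only about $n/(10dD)$ (Lemma~\ref{lem:disjoint}), and the constraint at $x\in A$ (that a reachable state $S$ cannot satisfy $S\cap N[x]=\{x\}$) evaporates whenever the tour enters and leaves $x$ through two high-degree vertices. Up to $|H|$ vertices of $A$ can be bypassed this way, and accounting for which ones costs a factor $\binom{|A|}{\min(|A|,|H|)}$ in the state count. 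With only the averaging bound, $|H|$ may be as large as $dn/D\ge 10d^2\cdot n/(10dD)\ge |A|$, so \emph{all} of $A$ may be bypassed, the binomial factor is vacuous, and no savings can be certified at all --- this is precisely the ``single high-degree vertex reintroduces the branching'' danger you flag, except that the dangerous role is not the path endpoint (that only costs a polynomial factor) but the cycle-neighbours of the trimming vertices.

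The missing idea is the paper's Lemma~\ref{lem:gap}: by a harmonic-series argument, for any $\alpha$ there is some threshold $D\le e^{\alpha}$ at which averaging is beaten by a factor $\alpha$, i.e.\ $|\Vgt{D}|\le \frac{dn}{\alpha D}$; the averaging bound ``cannot be tight for many values of $D$ at once.'' Taking $\alpha=e^{cd}$ makes the high-degree set $Y$ at most half the size of $A$, so the bypass-accounting factor $\binom{|A|}{|Y|}\le (e|A|/|Y|)^{|Y|}$ has exponent $O\bigl(nd\ln\alpha/(\alpha D)\bigr)$, which is then dominated by the savings $\Theta\bigl(n/(dD2^{2d})\bigr)$ harvested from $A$; this comparison is the entire content of Lemma~\ref{lem:vissets-bound}, and it is also why the paper's $\eps_d$ is doubly exponential in $d$, whereas your sketch implicitly promises a single-exponential dependence --- a warning sign that the trade-off as you set it up cannot close. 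A secondary flaw: your claim that the cyclic structure on $H$ can be enumerated in $2^{O(|H|)}$ time is also unjustified, since a path system on $H$ together with its attachment points in $L$ has $2^{\Theta(|H|\log n)}$ possibilities, which exceeds any $2^{\eps n}$ budget; the paper avoids explicit guessing altogether by running one dynamic program over all of $V$ and bounding the number of \emph{reachable} states via $\vissets(G,s,t)$ (Definition~\ref{def:vissets}), into which the high-degree vertices are absorbed rather than enumerated.
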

We note that in Theorem \ref{thm:tsp} the constant $\eps_d$ depends on $d$ in doubly-exponential
manner, which is worse than the single-exponential behaviour of \cite{tsp-deg} in graphs
of bounded degree.

The proof of Theorem \ref{thm:tsp} follows the same general approach as the results of \cite{tsp-deg} ---
we want to limit the number of states of the classical dynamic programming algorithm for TSP
--- but, in order to deal with graphs of bounded {\em{average}} degree, we need to introduce new concepts
and tools.
Recall that, by standard averaging argument, if the average degree of an $n$-vertex graph $G$ is bounded by $d$,
for any $D \geq d$ there are at most $dn/D$ vertices of degree at least $D$. However,
it turns out that this bound cannot be tight for a large number of values of $D$ at the same time.
This simple observation lies at the heart of the proof of Theorem \ref{thm:tsp}, as we may afford
more expensive branching on vertices of degree more than $D$ provided that there are significantly less than
$dn/D$ of them.

In the second part, we move to the problem of counting perfect matchings in an $n$-vertex graph.
We start with an observation that this problem can be reduced to a problem of counting
some special types of cycle covers, which, in turn, can be done in $\Ohstar(2^{n/2})$-time
and polynomial space using the inclusion-exclusion principle (see Section~\ref{sec:in-ex-matchings}).
Note that an algorithm matching this bound in general graphs 
has been discovered only last year \cite{bjorklund-matchings},
in contrast to the 50-years old algorithm of Ryser \cite{ryser} for bipartite graphs.
Thus, we obtain a new proof of the main result of \cite{bjorklund-matchings},
using the inclusion-exclusion principle instead of advanced algebraic transformations.

Once we develop our inclusion-exclusion-based algorithm for counting perfect matchings,
we may turn it into a dynamic programming algorithm
and apply the ideas of Theorem \ref{thm:tsp}, obtaining the following.
\begin{theorem}\label{thm:matchings}
Given an $n$-vertex graph $G$ of average degree bounded by $d$,
in $\Ohstar(2^{(1-\eps_{2d})n/2})$ time and exponential space
one can count the number of perfect matchings in $G$
where $\eps_{2d}$ is the constant given by Theorem \ref{thm:tsp}
for graphs of average degree at most $2d$.
\end{theorem}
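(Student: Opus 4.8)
The plan is to reduce the counting of perfect matchings to a Traveling Salesman Problem instance on a related graph, so that the trimming machinery of Theorem~\ref{thm:tsp} can be applied almost as a black box. Concretely, I would first recall the reduction alluded to in Section~\ref{sec:in-ex-matchings}: counting perfect matchings in $G$ is transformed into counting (weighted) cycle covers of a certain auxiliary structure, and this counting can be expressed via an inclusion--exclusion formula whose inner computation is itself a dynamic program over subsets of a vertex set of size $n/2$. The key point is that the state space of this inner dynamic program has the same combinatorial shape as the Held--Karp dynamic program for TSP: states are subsets $S$ of a ground set together with an endpoint, and transitions correspond to traversing edges of an underlying graph. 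This is why the exponent is $n/2$ rather than $n$ --- the inclusion--exclusion works over half the vertices.

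The heart of the argument is to identify the graph on which this auxiliary dynamic program runs and to bound its average degree. I expect that the natural auxiliary graph has roughly $n/2$ vertices but that each such vertex inherits degree contributions from up to two original vertices of $G$, so its average degree is bounded by something like $2d$ when $G$ has average degree at most $d$. This is precisely why Theorem~\ref{thm:matchings} is phrased in terms of $\eps_{2d}$: the trimming saving we obtain is exactly the saving that Theorem~\ref{thm:tsp} guarantees for a TSP-shaped dynamic program on a graph of average degree at most $2d$. I would therefore make the correspondence explicit, showing that each subset-state appearing in the matching dynamic program corresponds to a state of a TSP dynamic program on the auxiliary graph, and that the transitions agree.

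Once the correspondence is set up, I would invoke the trimming analysis underlying Theorem~\ref{thm:tsp}. The main technical content there is the observation that, in a graph of bounded average degree, one cannot have close to $dn/D$ vertices of degree at least $D$ for many values of $D$ simultaneously; this lets one prune the set of \emph{reachable} dynamic-programming states down to a $2^{(1-\eps)n}$-sized collection. I would argue that the identical pruning applies to the reachable states of the matching dynamic program, since reachability in both cases is governed by the edge set of the underlying (auxiliary) graph. The inclusion--exclusion outer loop contributes only a polynomial-in-$n$ (or at worst $2^{o(n)}$) overhead and does not affect the exponent, so the total running time becomes $\Ohstar(2^{(1-\eps_{2d})n/2})$, with exponential space used to store the trimmed table.

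The main obstacle I anticipate is \emph{verifying that trimming is compatible with the inclusion--exclusion formulation rather than only with straight optimization}. In the TSP setting of Theorem~\ref{thm:tsp} we are computing a minimum, so we may safely discard states that will never be extended to a full tour; in the counting setting we must be certain that every state we prune genuinely contributes zero to the final count, so that discarding it preserves the exact answer rather than merely a lower bound. I would address this by showing that the pruned states are exactly those that are \emph{unreachable} (their contribution to every term of the inclusion--exclusion sum is zero), so correctness of the count is maintained, and only the reachable states --- whose number the average-degree argument bounds by $2^{(1-\eps_{2d})n/2}$ up to polynomial factors --- are ever stored or updated. Establishing this reachability-equals-zero-contribution claim, and checking that the auxiliary graph's average degree is indeed at most $2d$, are the two points where I would spend most of the care.
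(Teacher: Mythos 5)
Your high-level skeleton is sound and matches the paper's in its key quantitative ingredients: the auxiliary multigraph $G'$ obtained by pairing up vertices of $G$ has $n/2$ vertices and average degree exactly $2d$ (same edge set, half the vertices), the counting is done by a Held--Karp-shaped dynamic program over subsets of $V(G')$, and the number of non-zero states is bounded by applying the trimming machinery (Lemma~\ref{lem:vissets-bound}) to $G'$. Your insistence that, for counting, every pruned state must contribute exactly zero --- not merely be useless for optimization --- is also precisely the right concern, and it is resolved the way you suggest: the algorithm only ever generates and extends non-zero entries.

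However, there is a genuine structural error in how you plan to use inclusion--exclusion, and it is not a side issue --- it is the step where the paper's actual contribution lies. In the inclusion--exclusion algorithm of Section~\ref{sec:in-ex-matchings}, the exponential cost is the \emph{outer} sum over all $2^{n/2}$ subsets $I$ of labels; the inner computation for each fixed $I$ (counting nice closed walks in $G''_I$) is polynomial. There is no ``inner dynamic program over subsets'' to trim, and the outer sum cannot be trimmed: inclusion--exclusion needs every one of the $2^{n/2}$ signed terms, and those terms are generically non-zero (they count walk tuples in a graph with some arcs deleted), so no reachability or zero-contribution argument can discard any of them. Your claim that ``the inclusion--exclusion outer loop contributes only a polynomial-in-$n$ overhead'' is therefore false, and an algorithm built as you describe would not beat $\Ohstar(2^{n/2})$. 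The missing idea --- the content of Section~\ref{sec:dp-matchings} --- is to \emph{abandon} inclusion--exclusion entirely and count directly, by dynamic programming, the cycle covers of $G'$ whose edge labels partition $V(G)$. What makes this possible is that the label-disjointness condition can be checked \emph{locally}: a cycle cover violates it if and only if at some vertex $v_i'$ the two incident cover edges have intersecting labels. Consequently the DP must carry, besides the subset $X \subseteq V(G')$ and the endpoints of the partial path, one extra coordinate recording which half of the current vertex's label the last edge consumed (the paper's table $t_2[q][X][v_a'][v_b'][x]$); only with this bookkeeping do the non-zero states embed into $\bigcup_{s,t}\vissets(G',s,t)$ so that Lemma~\ref{lem:vissets-bound} applies. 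Note that in Section~\ref{sec:in-ex-matchings} the role of inclusion--exclusion was exactly to enforce that every label is used once; if you drop it without installing this local verification mechanism, the cycle covers you count no longer correspond to perfect matchings of $G$.
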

To the best of our knowledge, this is the first result that
breaks the $2^{n/2}$-barrier for counting perfect matchings in not necessarily bipartite
graphs of bounded (average) degree.

When bipartite graphs are concerned, the classical algorithm of Ryser \cite{ryser}
has been improved for graphs of bounded average degree first by Servedio and Wan \cite{servedio-wan}
and, very recently, by Izumi and Wadayama \cite{japonce}. 
Our last result is the following theorem.
\begin{theorem}\label{thm:bip-matchings}
Given an $n$-vertex bipartite graph $G$ of average degree bounded by $d$,
in $\Ohstar(2^{(1-1/(\bipconst d))n/2})$ time and exponential space
one can count the number of perfect matchings in $G$.
\end{theorem}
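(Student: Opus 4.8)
The plan is to count perfect matchings in a bipartite graph $G=(A \cup B, E)$ by reducing to a counting problem over cycle covers and then applying the degree-based trimming of the dynamic programming, but specialized to exploit the simpler structure of the bipartite (permanent) setting. Concretely, I would first recall that the number of perfect matchings in a bipartite graph equals the permanent of its biadjacency matrix, and that Ryser's formula expresses this permanent as an inclusion--exclusion sum of $\Ohstar(2^{n/2})$ terms indexed by subsets $S \subseteq B$, where each term is a product over the $n/2$ vertices of $A$ of the quantity $\sum_{b \in S} \mathbf{1}[ab \in E]$. The whole savings must come from arguing that a large fraction of these subsets $S$ can be ignored or aggregated because they contribute zero, or because they can be batched together via a dynamic-programming-style state that is much smaller than $2^{n/2}$.

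The key observation to drive the improvement is the same averaging idea used for Theorem~\ref{thm:tsp}: since the average degree is at most $d$, most vertices have small degree, and a term in Ryser's sum (or a branching state) can be pruned as soon as the partial assignment forces a vertex to have no available neighbor. First I would set a threshold and separate the vertex set into low-degree and high-degree parts; on the low-degree side, each vertex $a \in A$ has only a bounded number of neighbors, so the choice of which neighbor of $a$ lies in the matching is severely constrained, and we can branch on these vertices while maintaining at any point a ``frontier'' whose number of possible states is exponentially smaller than $2^{n/2}$. The explicit constant $1/(\bipconst d)$ in the exponent should emerge from optimizing the threshold against the counting bound $\binom{n/2}{\alpha n/2}$ that governs how many subsets of a given density survive, balanced against the branching cost incurred on the few high-degree vertices.

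The main steps, in order, would be: (i) restate the permanent/Ryser inclusion--exclusion formula and phrase its evaluation as a reachability/accumulation computation over an ordered processing of the vertices of $A$; (ii) use the bounded-average-degree hypothesis to show, by the averaging argument from the introduction, that for a suitable degree threshold $D$ there are $o(dn/D)$ high-degree vertices for enough scales of $D$ simultaneously, so that the dominant contribution comes from low-degree vertices; (iii) bound the number of distinct states arising from the low-degree vertices by a binomial-coefficient estimate, converting it into a bound of the form $2^{(1 - 1/(\bipconst d))n/2}$ via the standard entropy estimate $\binom{m}{\beta m} \le 2^{H(\beta)m}$; (iv) verify that the exponential-space dynamic program stores exactly these surviving states and that transitions are computed in polynomial time per state, yielding the claimed running time.

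The hard part will be step (iii): extracting the clean closed-form constant $1/(\bipconst d)$ from the interplay between the number of low-degree vertices, the per-vertex branching degree, and the binomial state count, while ensuring the bound holds \emph{uniformly} rather than only for a convenient value of the degree threshold. This is exactly where the simplification over Izumi and Wadayama~\cite{japonce} must be made precise, and it is also where one must be careful that the averaging over degree scales does not hide a dependence that weakens the exponent; I expect the argument to mirror the trimming analysis behind Theorem~\ref{thm:tsp} but to be cleaner here because the bipartite permanent structure removes the parity/connectivity bookkeeping that complicates the Hamiltonian-cycle case.
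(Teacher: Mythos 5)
Your high-level plan (a subset dynamic program over one side of the bipartition, pruning of zero states, and an entropy bound on the survivors) points in the right general direction, but the proposal is missing the single idea that makes the paper's proof work, and the substitute ideas you offer would not deliver the claimed constant. First, pruning zero terms of Ryser's formula (or zero states of an unspecified ``frontier'' DP) cannot by itself give any exponential saving: if $B$ contains one vertex $b^*$ adjacent to all of $A$ (which is perfectly compatible with average degree $O(1)$, e.g.\ each $a_i$ also adjacent to $b_i$, and the $b_i$'s arranged to have degree $2$), then every subset $S \ni b^*$ makes all factors $|N(a)\cap S|$ positive, so at least $2^{n/4}$ of the Ryser terms are nonzero. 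Second, your step (ii) --- reusing the multi-scale degree-threshold averaging behind Theorem~\ref{thm:tsp} (i.e.\ Lemma~\ref{lem:gap} and Lemma~\ref{lem:vissets-bound}) --- is the wrong tool here: that analysis produces a savings constant that is \emph{doubly exponentially} small in $d$, whereas Theorem~\ref{thm:bip-matchings} requires the explicit constant $1/(\bipconst d)$; indeed the paper stresses that Section~\ref{sec:bip-matchings} is self-contained and does not rely on Lemma~\ref{lem:vissets-bound} at all.

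The mechanism the paper actually uses, and which your proposal never identifies, is a \emph{degree-based choice of a forbidden set together with a matching ordering of $A$}. Let $k=n/2$; after reducing degree-$1$ vertices one may assume minimum degree $\ge 2$, hence $d \ge 2$. Take $B_0 \subseteq B$ to be the $\lfloor k/(\alpha d)\rfloor$ vertices of \emph{smallest} degree in $B$; since their average degree is at most $d$, the neighbourhood $A_0 = N(B_0)$ has size at most $k/\alpha$. Order $A$ so that $A_0$ comes \emph{last}. In the DP where $t[X]$, $X\subseteq B$, counts perfect matchings of $\{a_1,\dots,a_{|X|}\}$ against $X$, the ordering guarantees: if $|X| \le (1-1/\alpha)k$ and $X \cap B_0 \neq \emptyset$, then any $b \in X\cap B_0$ is isolated in the induced subgraph (all its neighbours lie in the not-yet-processed set $A_0$), so $t[X]=0$. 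Hence nonzero states of small cardinality number at most $2^{k-\lfloor k/(\alpha d)\rfloor}$, while states of cardinality above $(1-1/\alpha)k$ number at most $k\binom{k}{\lceil k/\alpha\rceil} = \Ohstar\left(2^{H(1/\alpha)k}\right)$, and choosing $\alpha=\bipconst$ makes $2^{H(1/\alpha)} \le 2^{1-1/(\alpha d)}$ for all $d\ge 2$. Your step (iii) gestures at ``the interplay between the number of low-degree vertices, the per-vertex branching degree, and the binomial state count,'' but without the choice of $B_0$ by smallest degree, the bound $|N(B_0)|\le k/\alpha$, and the ordering of $A$ placing $N(B_0)$ last, there is no argument that exponentially many states vanish, and the proposal cannot be completed as written.
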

Hence, we improve the running time of \cite{japonce,servedio-wan} in terms of the dependency on $d$.
We would like to emphasise that our proof of Theorem \ref{thm:bip-matchings} is elementary
and does not need the advanced techniques of coding theory used in \cite{japonce}.

\paragraph{Organization of the paper}
Section~\ref{sec:prelim} contains preliminaries.
Next, in Section~\ref{sec:properties} we prove the main technical tool,
that is Lemma~\ref{lem:vissets-bound},
used in the proofs of Theorem~\ref{thm:tsp} and Theorem~\ref{thm:matchings}.
In Section~\ref{sec:tsp} we prove Theorem~\ref{thm:tsp},
while in Section~\ref{sec:in-ex-matchings} we first show
an inclusion-exclusion based algorithm for counting
perfect matchings, which is later modified in Section~\ref{sec:dp-matchings}
to fit the bounded average degree framework and prove Theorem~\ref{thm:matchings}.
Finally, Section~\ref{sec:bip-matchings} contains a simple
dynamic programming algorithm, proving Theorem~\ref{thm:bip-matchings}.

We would like to note that both Section~\ref{sec:in-ex-matchings} and Section~\ref{sec:bip-matchings}
are self-contained and do not rely on other sections (in particular
do not depend on Lemma~\ref{lem:vissets-bound}).

\section{Preliminaries}
\label{sec:prelim}

We use standard (multi)graph notation. For a graph $G=(V,E)$ and a vertex $v \in V$
the neighbourhood of $v$ is defined as $N_G(v) = \{u: uv \in E\} \setminus \{v\}$ and
the closed neighbourhood of $v$ as $N_G[v] = N_G(v) \cup \{v\}$.
The degree of $v \in V$ is denoted $\deg_G(v)$ and equals
the number of end-points of edges incident to $v$.
In particular a self-loop contributes $2$ to the degree of a vertex.
We omit the subscript if the graph $G$ is clear from the context.
The average degree of an $n$-vertex graph $G=(V,E)$ is defined as $\frac{1}{n} \sum_{v \in V} \deg(v) = 2|E|/n$.
A cycle cover in a multigraph $G=(V,E)$ 
is a subset of edges $C \subseteq E$,
where each vertex is of degree exactly two if $G$ is undirected
or each vertex has exactly one outgoing and one ingoing arc, if $G$ is directed.
Note that this definition allows a cycle cover to contain cycles of length $1$, i.e. self-loops, as well as taking two different parallel edges as length $2$ cycle
(but does not allow using twice the same edge).

For a graph $G=(V,E)$ by $\Veq{c}, \Vgt{c}, \Vgeq{c}$ let us denote
the subsets of vertices of degree equal to $c$, greater than $c$ and at least $c$ respectively.

We also need the following well-known bounds.
\begin{lemma}\label{lem:coeff-bound}
For any $n, k \geq 1$ it holds that
$$\binom{n}{k} \leq \left( \frac{en}{k} \right)^k.$$
\end{lemma}
\begin{lemma}\label{lem:Hn-bound}
For any $n \geq 1$, it holds that $H_{n-1} \geq \ln n$, where $H_n = \sum_{i=1}^n \frac{1}{i}$.
\end{lemma}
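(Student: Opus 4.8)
The plan is to prove Lemma~\ref{lem:Hn-bound} by the standard comparison of the harmonic sum with the integral of $1/x$, exploiting the fact that $\ln n = \int_1^n \frac{dx}{x}$ exactly. The idea is that each term $\frac1i$ of the sum $H_{n-1} = \sum_{i=1}^{n-1} \frac1i$ can be matched against the contribution of the integral over the unit interval $[i,i+1]$, so that summing over $i$ reconstructs the integral over the whole range $[1,n]$.

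The key step is the observation that $x \mapsto \frac1x$ is monotonically decreasing on the positive reals, so for every integer $i \geq 1$ and every $x \in [i,i+1]$ we have $\frac1i \geq \frac1x$. Integrating this inequality over $[i,i+1]$, whose length is $1$, yields
$$\frac1i \;\geq\; \int_i^{i+1} \frac{dx}{x}.$$
Summing over $i = 1, 2, \ldots, n-1$ and using additivity of the integral over adjacent intervals then gives
$$H_{n-1} \;=\; \sum_{i=1}^{n-1} \frac1i \;\geq\; \sum_{i=1}^{n-1} \int_i^{i+1} \frac{dx}{x} \;=\; \int_1^n \frac{dx}{x} \;=\; \ln n,$$
which is exactly the claimed bound.

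There is essentially no serious obstacle here; the only point requiring care is matching the index range of the sum to the limits of the integral. Since $H_{n-1}$ runs over $i$ from $1$ to $n-1$ and each term covers the interval $[i,i+1]$, the intervals tile exactly $[1,n]$, producing $\ln n$ rather than $\ln(n-1)$ or $\ln(n+1)$. It remains only to check the degenerate case $n=1$, where $H_0$ is the empty sum equal to $0$ and $\ln 1 = 0$, so the inequality holds with equality; this confirms the bound across the full stated range $n \geq 1$.
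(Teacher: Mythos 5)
Your proof is correct, and it takes a genuinely different route from the paper. You use the classical integral comparison: since $1/x$ is decreasing, $\frac{1}{i} \geq \int_i^{i+1} \frac{dx}{x}$ for each $i$, and summing over $i = 1, \ldots, n-1$ tiles the interval $[1,n]$ to give $H_{n-1} \geq \int_1^n \frac{dx}{x} = \ln n$, with the degenerate case $n=1$ checked separately. The paper instead invokes known facts about the Euler--Mascheroni constant: that the sequence $H_n - \ln n$ is decreasing with limit $\gamma > 0.577$, so $H_{n-1} = H_n - \frac{1}{n} \geq \ln n + \gamma - \frac{1}{n}$, which is nonnegative for $n \geq 2$ since $\gamma > \frac{1}{2}$ (again with $n=1$ handled directly). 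Your argument is more self-contained --- it needs only monotonicity of $1/x$ and elementary calculus, whereas the paper imports the convergence and monotonicity of $H_n - \ln n$ and a numerical bound on $\gamma$, facts whose standard proofs themselves go through the very integral comparison you wrote out. The paper's route is slightly shorter on the page and in fact yields the stronger intermediate bound $H_{n-1} \geq \ln n + \gamma - \frac{1}{n}$, but that extra strength is never used; for the lemma as stated, your proof is the cleaner one.
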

\begin{proof}
It is well-known that $\lim_{n \to \infty} H_n - \ln n = \gamma$
where $\gamma > 0.577$ is the Euler-Mascheroni constant and the sequence $H_n-\ln n$ is decreasing.
Therefore $H_{n-1} = H_n - \frac{1}{n} \geq \ln n + \gamma - \frac{1}{n}$, hence the lemma is proven for $n \geq 2$
as $\gamma > \frac{1}{2}$. For $n=1$, note that $H_{n-1} = \ln n = 0$.
\end{proof}

\section{Properties of bounded average degree graphs}
\label{sec:properties}

This section contains technical results concerning bounded
average degree graphs.
In particular we prove Lemma~\ref{lem:vissets-bound},
which is needed to get the claimed running times 
in Theorems~\ref{thm:tsp} and~\ref{thm:matchings}.
However, as the proofs of this section are not needed 
to understand the algorithms in further sections
the reader may decide to see only Definition~\ref{def:vissets}
and the statement of Lemma~\ref{lem:vissets-bound}.

\begin{lemma}
\label{lem:disjoint}
Given an $n$-vertex graph $G=(V,E)$ of average degree at most $d$ and maximum degree at most $D$
one can in polynomial time find a set $A$ containing $\lceil \frac{n}{2+4dD} \rceil$ vertices
of degree at most $2d$, where for each $x,y \in A$, $x \neq y$ we have $N_G[x] \cap N_G[y] = \emptyset$.
\end{lemma}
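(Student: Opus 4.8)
The plan is to first discard the high-degree vertices and then greedily select a well-spread subset of the remaining low-degree ones. Let $L = \Vleq{2d}$ denote the set of vertices of degree at most $2d$. My first step would be to argue that $L$ is large: since the sum of all degrees equals $2|E| \le dn$, and every vertex outside $L$ contributes strictly more than $2d$ to this sum, there can be fewer than $n/2$ vertices outside $L$. Hence $|L| \ge n/2$. This is simply Markov's inequality applied to the degree sequence, and in particular every vertex I eventually place in $A$ will automatically have degree at most $2d$.

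Next I would build $A$ greedily while maintaining a candidate pool $P$, initialized to $L$. As long as $P$ is nonempty, pick an arbitrary $v \in P$, add $v$ to $A$, and delete from $P$ the vertex $v$ together with every vertex $u$ whose closed neighbourhood meets that of $v$, i.e. every $u$ with $N_G[u] \cap N_G[v] \neq \emptyset$. By construction any two vertices selected into $A$ have disjoint closed neighbourhoods, which is exactly the separation condition required, and each selected vertex lies in $L$. The procedure is clearly implementable in polynomial time.

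It remains to lower bound $|A|$ by controlling how many vertices each iteration removes from $P$. A vertex $u$ satisfies $N_G[u] \cap N_G[v] \neq \emptyset$ precisely when $u$ lies within distance $2$ of $v$, so the deleted set is contained in the radius-$2$ ball around $v$. Since $\deg(v) \le 2d$ and every neighbour of $v$ has degree at most $D$, this ball has size at most $1 + 2d + 2d(D-1) = 1 + 2dD$. Thus each iteration discards at most $1 + 2dD$ vertices, and the number of iterations, which equals $|A|$, is at least $|L|/(1+2dD) \ge (n/2)/(1+2dD) = n/(2+4dD)$. As $|A|$ is an integer, this yields $|A| \ge \lceil n/(2+4dD) \rceil$, and I would output any subset of $A$ of exactly that size. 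I expect the only delicate point to be this final bookkeeping, namely matching the radius-$2$ ball bound of $1+2dD$ against the factor $n/2$ so that the denominator comes out to precisely $2+4dD$; the rest is a routine greedy packing argument.
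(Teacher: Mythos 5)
Your proposal is correct and follows essentially the same argument as the paper: both greedily pick low-degree vertices and discard the radius-$2$ ball $N_G[N_G[v]]$ of size at most $1+2d+2d(D-1)=1+2dD$ around each pick, using $|\Vleq{2d}|\ge n/2$ to get the count. The only differences are cosmetic --- you maintain an explicit candidate pool rather than marking vertices, and you spell out the Markov-type bound and the integrality step for the ceiling, which the paper leaves implicit.
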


\begin{proof}
Note that $|\Vleq{2d}| \geq n/2$. We apply the following procedure.
Initially we set $A:=\emptyset$ and all the vertices are unmarked.
Next, as long as there exists an unmarked vertex $x$ in $\Vleq{2d}$,
we add $x$ to $A$ and mark all the vertices $N_G[N_G[x]]$.
Since the set $N_G[N_G[x]]$ contains at most $1+2d+2d(D-1) = 1+2dD$ vertices,
at the end of the process we have $|A| \geq \frac{n}{2+4dD}$.
Clearly this routine can be implemented in polynomial time.
\end{proof}

\begin{lemma}
\label{lem:gap}
For any $\alpha \geq 0$
and an $n$-vertex graph $G=(V,E)$ of average degree at most $d$
there exists $D \leq e^\alpha $ such that $|\Vgt{D}| \leq \frac{nd}{\alpha D}$.
\end{lemma}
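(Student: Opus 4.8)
The plan is to argue by contradiction, exploiting the heuristic flagged in the introduction: the naive averaging bound $|\Vgt{D}| \le dn/D$ cannot hold with the required slack for every integer threshold at once. First I would dispose of the degenerate case $\alpha = 0$, where the right-hand side $\frac{nd}{\alpha D}$ is unbounded and the statement holds trivially with $D = 1 \le e^0$. So assume $\alpha > 0$ from now on.

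Assuming $\alpha > 0$, suppose toward a contradiction that $|\Vgt{D}| > \frac{nd}{\alpha D}$ for \emph{every} integer $D \in \{1, \dots, \lfloor e^\alpha \rfloor\}$. Summing this inequality over all such $D$ produces a harmonic sum on the right:
$$\sum_{D=1}^{\lfloor e^\alpha \rfloor} |\Vgt{D}| \;>\; \frac{nd}{\alpha}\sum_{D=1}^{\lfloor e^\alpha \rfloor}\frac{1}{D} \;=\; \frac{nd}{\alpha}\,H_{\lfloor e^\alpha \rfloor}.$$
This is the step where the factor $1/\alpha$ gets cancelled and the logarithmic gain appears.

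The key is then to bound both sides. For the right-hand side, Lemma~\ref{lem:Hn-bound} applied with $n = \lfloor e^\alpha \rfloor + 1$ gives $H_{\lfloor e^\alpha \rfloor} \ge \ln(\lfloor e^\alpha \rfloor + 1) > \alpha$, where the strict inequality uses $\lfloor e^\alpha \rfloor + 1 > e^\alpha$; hence the whole sum strictly exceeds $nd$. For the left-hand side, I would evaluate it by double counting over vertices rather than thresholds: $\sum_{D \ge 1} |\Vgt{D}| = \sum_{v \in V} |\{D \ge 1 : D < \deg(v)\}| = \sum_{v \in V} \max(\deg(v) - 1, 0) \le \sum_{v \in V} \deg(v) = 2|E| \le dn$, using the average-degree hypothesis. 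The two bounds are incompatible ($nd < \sum \le dn$), so the contradiction assumption fails: some integer $D \le \lfloor e^\alpha \rfloor \le e^\alpha$ must satisfy $|\Vgt{D}| \le \frac{nd}{\alpha D}$, which is exactly what is claimed.

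I do not expect a genuine obstacle here; the work is in choosing the right quantity to sum. The only care needed is (i) summing the threshold counts $|\Vgt{D}|$ over integer $D$ so that precisely the harmonic number $H_{\lfloor e^\alpha\rfloor}$ emerges and matches the $1/\alpha$ factor, and (ii) tracking the strict-versus-nonstrict inequalities — it is the strictness $\lfloor e^\alpha \rfloor + 1 > e^\alpha$ that makes the final contradiction clean. Everything else is the double-counting identity and the elementary degree-sum bound.
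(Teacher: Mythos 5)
Your proposal is correct and follows essentially the same route as the paper's own proof: assume for contradiction that $|\Vgt{D}| > \frac{nd}{\alpha D}$ for every integer $D \le e^\alpha$, sum these inequalities to produce $\frac{nd}{\alpha}H_{\lfloor e^\alpha \rfloor}$, lower-bound the harmonic number by $\alpha$ via Lemma~\ref{lem:Hn-bound}, and contradict the degree-sum bound $\sum_{i} |\Vgt{i}| \le nd$. Your explicit treatment of $\alpha = 0$ and the slightly sharper double-counting identity $\sum_{D\ge 1}|\Vgt{D}| = \sum_{v}\max(\deg(v)-1,0)$ are minor cosmetic refinements of the same argument.
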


\begin{proof}
By standard counting arguments we have
$$\sum_{i=0}^\infty |\Vgt{i}| = \sum_{i=0}^\infty i|\Veq{i}| \leq nd.$$
For the sake of contradiction assume that
$|\Vgt{i}| > \frac{nd}{\alpha i}$, 
for each $i \leq e^\alpha$. Then
$$\sum_{i=0}^\infty |\Vgt{i}| \geq \sum_{i=1}^{\lfloor e^\alpha \rfloor} |\Vgt{i}| > \frac{nd}{\alpha} \sum_{i=1}^{\lfloor e^\alpha \rfloor} 1/i = \frac{nd}{\alpha} H_{\lfloor e^\alpha \rfloor} \geq nd,$$
where the last inequality follows from Lemma \ref{lem:Hn-bound}.
\end{proof}

In the following definition we capture the superset of the
sets used in the dynamic programming algorithms of Theorems~\ref{thm:tsp}
and~\ref{thm:matchings}.

\begin{definition}
\label{def:vissets}
For an undirected graph $G=(V,E)$ and two vertices $s,t \in V$
by $\vissets(G,s,t)$ we define the set of all subsets $X \subseteq V \setminus \{s,t\}$,
for which there exists a set of edges $F \subseteq E$ such that:
\begin{itemize}
  \item $\deg_F(v)=0$ for each $v \in V \setminus (X \cup \{s,t\})$,
  \item $\deg_F(v)=2$ for each $v \in X$,
  \item $\deg_F(v) \le 1$ for $v \in \{s,t\}$.
\end{itemize}
\end{definition}

\begin{lemma}\label{lem:vissets-bound}
For every $d \ge 1$ there exists a constant $\eps_d>0$,
such that for an $n$-vertex graph $G=(V,E)$ 
of average degree at most $d$ for any $s,t \in V$ the cardinality of $\vissets(G,s,t)$
is at most $\Ohstar(2^{(1-\eps_d)n})$.
\end{lemma}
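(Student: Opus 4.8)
The plan is to beat the trivial bound $|\vissets(G,s,t)|\le 2^{n-2}$ by a constant factor in the exponent, and the two driving tools are Lemma~\ref{lem:gap} (to isolate a \emph{small} set of high-degree vertices) and Lemma~\ref{lem:disjoint} (to produce \emph{many} low-degree vertices with pairwise disjoint closed neighbourhoods, serving as independent local gadgets). Fix a constant $\alpha=\alpha(d)$, to be chosen at the very end, and apply Lemma~\ref{lem:gap} to get a threshold $D\le e^{\alpha}$ with $h:=|\Vgt{D}|\le \frac{dn}{\alpha D}$. Write $H:=\Vgt{D}$, $L:=V\setminus H$, and $G':=G-H$; every vertex of $G'$ has $G$-degree at most $D$, so $G'$ has maximum degree at most $D$, and since $h\le n/2$ its average degree is at most $2d$. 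Applying Lemma~\ref{lem:disjoint} to $G'$ then yields a set $A$ of $\Omega(n/(dD))$ vertices, each of $G'$-degree at most $4d$, whose closed $G'$-neighbourhoods $N_{G'}[x]$ are pairwise disjoint; discarding the at most two members adjacent to $s$ or $t$ gives $A'$ with $|A'|\ge|A|-2$.

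The key necessary condition is as follows. Fix $X\in\vissets(G,s,t)$ together with a witnessing edge set $F$, and set $Y:=X\cap H$. Call a vertex $x\in A'$ \emph{failing} if $x\in X$ yet at most one of its $G'$-neighbours lies in $X$. I claim at most $2h$ vertices of $A'$ are failing: a failing $x\in X$ has $\deg_F(x)=2$ but at most one $F$-edge to a low vertex of $X$ (and none to $s,t$, as $x$ is not adjacent to them), so at least one of its two $F$-edges runs into $Y$; since every vertex of $Y$ has $\deg_F=2$, there are exactly $2|Y|\le 2h$ edges of $F$ incident to $Y$, and each failing vertex consumes at least one of them. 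The crucial structural point is that whether $x$ is failing depends only on $X\cap N_{G'}[x]$, and these windows are pairwise disjoint, so across $A'$ the failing events are \emph{independent} under a uniformly random $X\cap L$; the high-degree vertices can rescue only $2h$ of them in total.

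It remains to count. The map $X\mapsto(X\cap H,\,X\cap L)$ is injective, so summing over the $2^{h}$ choices of $Y=X\cap H$,
\[
|\vissets(G,s,t)|\le 2^{h}\cdot \max_{Y}\#\{\,Z\subseteq L : \text{at most } 2h \text{ of the gadgets } x\in A' \text{ are failing}\,\}.
\]
For a single gadget the fraction of configurations of $X\cap N_{G'}[x]$ making $x$ failing is $(1+\deg_{G'}(x))2^{-(1+\deg_{G'}(x))}\ge \rho$ with $\rho:=(1+4d)2^{-(1+4d)}>0$, so by independence the inner count is at most $2^{|L|}\Pr[\mathrm{Bin}(|A'|,\rho)\le 2h]$. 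Bounding this lower tail by its largest term and invoking Lemma~\ref{lem:coeff-bound} gives the bound $2^{|L|}(2h{+}1)\big(\tfrac{e\rho|A'|}{2h}\big)^{2h}(1-\rho)^{|A'|-2h}$. Since $h+|L|=n$, we conclude $|\vissets(G,s,t)|\le 2^{n}(2h{+}1)\big(\tfrac{e\rho|A'|}{2h}\big)^{2h}(1-\rho)^{|A'|-2h}$, which is $2^{(1-\eps_d)n}$ as soon as the failing budget $2h$ is a small enough fraction of the expected number $\rho|A'|$ of failing gadgets, say $2h\le \rho|A'|/e^{3}$.

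Verifying this last inequality is where the quantitative choice of $\alpha$ enters, and it is the main obstacle. We have $|A'|=\Omega(n/(dD))$ and $h\le dn/(\alpha D)$, and the factor $D$ cancels in the ratio $h/(\rho|A'|)$, leaving a condition of the form $\alpha\gtrsim d^{2}/\rho$; since $\rho$ is exponentially small in $d$, this forces $\alpha$ exponential in $d$ and hence $D\le e^{\alpha}$ doubly-exponential in $d$, which is precisely why $\eps_d\approx \rho/(dD)$ comes out doubly-exponentially small, matching the remark after Theorem~\ref{thm:tsp}. I emphasise that a purely local argument on low-degree vertices fails outright (e.g.\ in $K_{2,m}$ every low vertex is rescued by the two high-degree ones, yet $|\vissets|$ is polynomial); it is exactly the exact-degree-two constraint on $Y$, capping the number of rescues at $2|Y|\le 2h$, together with the gap threshold of Lemma~\ref{lem:gap} keeping $h$ far below $\rho|A'|$, that turns ``few rescues'' into an exponential saving.
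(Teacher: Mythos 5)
Your proposal is correct and takes essentially the same route as the paper's own proof: Lemma~\ref{lem:gap} to make the high-degree set $Y$ of size $h\le dn/(\alpha D)$, Lemma~\ref{lem:disjoint} to get $\Omega(n/(dD))$ low-degree gadgets with disjoint closed neighbourhoods, the key observation that every gadget exhibiting the bad local pattern forces an $F$-edge into $Y$ (so at most $O(h)$ gadgets can be bad, since $Y$ carries at most $2|Y|$ $F$-edge endpoints), and then a sum over bad sets weighed against the per-gadget saving $(1-\rho)$, with $\alpha$ exponential in $d$ ensuring $h\ll\rho|A'|$ --- exactly the paper's structure and the same source of the doubly-exponential dependence. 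The only differences are cosmetic (the paper's bad pattern is $N_H[x]\cap X=\{x\}$, costing two edges into $Y$ and budget $|Y|$, versus your ``at most one low neighbour'', costing one edge and budget $2h$; your binomial-tail phrasing is the paper's explicit $\binom{|A|}{|Y|}$ bound), plus one harmless slip you should patch: if $s$ or $t$ itself has degree greater than $D$ it is not handled by discarding gadgets, but since $\deg_F(s),\deg_F(t)\le 1$ this rescues at most two extra gadgets, so the budget becomes $2h+2$ and nothing changes.
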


\begin{proof}
Use Lemma~\ref{lem:gap} with $\alpha=e^{cd}$ for some sufficiently large universal constant $c$ (it suffices to take $c=20$).
Hence we can find an integer $D \leq e^\alpha= e^{e^{cd}}$ such
that there are at most $\frac{nd}{\alpha D}$ vertices
of degree greater than $D$ in $G$.

Let $D' = \max(2d,D)$ and $H=G[\Vleq{D'}]$. 
Moreover let $Y = \Vgt{D'}$ and recall $|Y| \leq \frac{nd}{\alpha D}$, as $D' \geq D$ and $Y \subseteq \Vgt{D}$.
Note that $H$ contains at least $n/2$ vertices and has average degree upper bounded by $d$.
By Lemma~\ref{lem:disjoint} there exists a set $A \subseteq V(H)$
of $\lceil n/(4+8dD') \rceil$ vertices having disjoint closed neighbourhoods in $H$.
Note that, since $d \geq 1$ and $D' \geq 2d$:
\begin{equation}\label{eq:Adown}
|A| = \left\lceil \frac{n}{4+8dD'} \right\rceil \geq \frac{n}{4+8dD'} \geq \frac{n}{2dD'+8dD'} = \frac{n}{10dD'}.
\end{equation}
If $n \leq \frac{8edD'}{4-e}$, $n = \Oo(1)$ and the claim is trivial. Otherwise:
\begin{equation}\label{eq:Aup}
|A| = \left\lceil \frac{n}{4+8dD'} \right\rceil < \frac{n}{8dD'} + 1 < \frac{n}{2edD'}.
\end{equation}
Moreover, as $d \geq 1$ and $D' = \max(2d,D) \leq 2dD$, for sufficiently large $c$ we have:
\begin{equation}\label{eq:YA}
|Y| \leq \frac{nd}{\alpha D} \leq \frac{n}{20dD'} \cdot \frac{40d^3}{e^{cd}} < \frac{n}{20dD'} \leq \frac{|A|}{2}.
\end{equation}

Consider an arbitrary set $X \in \vissets(G,s,t)$, and a corresponding set $F \subseteq E$
from Definition~\ref{def:vissets}.
Define $Z_{X}$ as the set of
vertices $x \in X \cap V(H)$ such that $N_{H}(x) \cap X = \emptyset$.
Note that $F$ is a set of paths and cycles, where each vertex of $Z_X$ is of degree two,
hence $F$ contains at least $2|Z_X|$ edges between $Z_X$ and $Y$,
as any path/cycle of $F$ visiting a vertex of $Z_X$ has to enter from $Y$ and leave to $Y$.
Hence by the upper bound of $2$ on the degrees in $F$ we have $|Z_X| \le |Y|$.

For each $x \in A \setminus (Z_{X} \cup \{s,t\})$ we have
that $N_H[x] \cap X \neq \{x\}$ and $|N_H[x]| \leq 2d+1$.
By definition, if $x \in A \cap Z_{X}$, we have $N_H[x] \cap X = \{x\}$.
Therefore, for fixed $v$ and $A \cap Z_{X}$ there are at most
$$2^n \left( \frac{2^{2d+1}-1}{2^{2d+1}} \right)^{|A \setminus (Z_{X} \cup \{s,t\})|} \left(\frac{1}{2^{2d+1}}\right)^{|A \cap Z_{X}|} \leq 2^{n+2} \left( \frac{2^{2d+1}-1}{2^{2d+1}} \right)^{|A|}$$
choices for $X \in \vissets(G,s)$.

Moreover, there are at most $\sum_{i=0}^{|Y|} \binom{|A|}{i} \leq n\binom{|A|}{|Y|}$
choices for $Z_{X} \cap A$. Thus
\begin{equation}\label{eq:vis1}
|\vissets(G,s,t)| \leq 2^{n+2} \cdot \left(\frac{2^{2d+1}-1}{2^{2d+1}}\right)^{|A|} \cdot n \binom{|A|}{|Y|}.
\end{equation}

Let us now estimate $\binom{|A|}{|Y|}$ by Lemma \ref{lem:coeff-bound}. Since $D \leq D'$, $|Y| \leq \frac{nd}{\alpha D}$
and by \eqref{eq:Aup} and \eqref{eq:YA}:

\begin{equation}\label{eq:binomAY}
\binom{|A|}{|Y|} \leq \left(\frac{e|A|}{|Y|}\right)^{|Y|} \leq \left(e \frac{n}{2edD'} \cdot \frac{\alpha D}{nd}\right)^{\frac{nd}{\alpha D}} \leq \left(\frac{\alpha}{2d^2}\right)^{\frac{nd}{\alpha D}} < \alpha^\frac{nd}{\alpha D}.
\end{equation}

By the standard inequality $1-x \leq e^{-x}$ we have that
\begin{equation}\label{eq:2d}
(2^{2d+1}-1)/2^{2d+1} = (1-1/2^{2d+1}) \le e^{-1/2^{2d+1}}.
\end{equation}
Using \eqref{eq:Adown}, \eqref{eq:binomAY} and \eqref{eq:2d} we obtain that
$$\binom{|A|}{|Y|} \left(\frac{2^{2d+1}-1}{2^{2d+1}}\right)^{|A|/2} \leq \exp\left(\frac{nd \ln \alpha}{\alpha D} - \frac{n}{20dD' 2^{2d+1}}\right).$$
Plugging in $\alpha = e^{cd}$ and using the fact that $e^{10d} > 40d^2$ for $d \geq 1$ we obtain:
$$\binom{|A|}{|Y|} \left(\frac{2^{2d+1}-1}{2^{2d+1}}\right)^{|A|/2} \leq
\exp\left(\frac{ncd}{e^{(c-10)d}20d \cdot 2dD} - \frac{n}{20dD' 2^{2d+1}} \right).$$
Since $D' = \max(2d,D) \leq 2dD'$ and $e^{4d} > 2^{2d+1}$ as $d \geq 1$, we get
$$\binom{|A|}{|Y|} \left(\frac{2^{2d+1}-1}{2^{2d+1}}\right)^{|A|/2}
\leq \exp\left(\frac{n}{20dD'2^{2d+1}}\left(\frac{c}{e^{(c-14)d}} - 1\right) \right).$$
Finally, for sufficiently large $c$, as $d \geq 1$, we have $c < e^{(c-14)d}$ and
\begin{equation}\label{eq:1}
\binom{|A|}{|Y|} \left(\frac{2^{2d+1}-1}{2^{2d+1}}\right)^{|A|/2} < 1.
\end{equation}
Consequently, plugging \eqref{eq:1} into \eqref{eq:vis1} and using \eqref{eq:Adown} and \eqref{eq:2d} we obtain:
\begin{align*}
|\vissets(G,s,t)| &< n2^{n+2} \left(\frac{2^{2d+1}-1}{2^{2d+1}}\right)^{|A|/2} \\
    & \leq n2^{n+2} \exp\left(-\frac{n}{2^{2d+1} \cdot 20dD'}\right) \\
    & \leq n2^{n+2} \exp\left(-\frac{n}{2^{2d+1}\cdot 20d \cdot e^{e^{cd}}}\right).
\end{align*}
This concludes the proof of the lemma.
Note that the dependency on $d$ in the final constant $\eps_d$ is doubly-exponential.
\end{proof}

\section{Algorithm for TSP}
\label{sec:tsp}

To prove Theorem \ref{thm:tsp}, it suffices to solve in $\Ohstar(2^{(1-\eps_d)n})$ time the following problem.
We are given an undirected $n$-vertex graph $G=(V,E)$ of average degree at most $d$,
vertices $a,b \in V$ and a weight function $c : E \to \mathbb{R_+}$.
We are to find the cheapest Hamiltonian path between $a$ and $b$ in $G$, or verify that no Hamiltonian $ab$-path exists.

We solve the problem by the standard dynamic programing approach.
That is for each $a \in X \subseteq V$ and $v \in X$ we compute 
$t[X][v]$, which is the cost of the cheapest path from $a$ to $v$ with the vertex set $X$.
The entry $t[V][b]$ is the answer to our problem.
Note that it is enough to consider only such pairs $(X,v)$,
     for which there exists an $av$-path with the vertex set $X$.

We first set $t[\{a\}][a] = 0$. Then iteratively,
for each $i=1,2,\ldots,n-1$,
for each $u \in V$,
for each $X \subseteq V$ such that $|X|=i$, $a,u \in X$ and $t[X][u]$ is defined,
for each edge $uv \in E$ where $v \not\in X$,
if $t[X \cup \{v\}][v]$  is undefined or $t[X \cup \{v\}][v] > t[X][u] + c(uv)$,
we set $t[X \cup \{v\}][v] = t[X][u] + c(uv)$.

Finally, note that if $t[X][v]$ is defined then 
$X \setminus \{a,v\} \in \vissets(G,a,v)$.
Hence, the complexity of the above algorithm is within a polynomial factor
from $\sum_{v \in V} |\vissets(G,a,v)|$, which is bounded by $\Ohstar(2^{(1-\eps_d)n})$ by Lemma \ref{lem:vissets-bound}.

\section{Counting Perfect Matchings}

In this section we design algorithms counting
the number of perfect matchings in a given graph.
First, in Section~\ref{sec:in-ex-matchings}
we show an inclusion-exclusion based algorithm,
which given an $n$-vertex graph computes the number
of its perfect matchings in $\Ohstar(2^{n/2})$ time
and polynomial space.
This matches the time and space bounds of the algorithm of 
Bj\"{o}rklund~\cite{bjorklund-matchings}.
Next, in Section~\ref{sec:dp-matchings} we show
how the algorithm from Section~\ref{sec:in-ex-matchings}
can be reformulated as a dynamic programming routine
(using exponential space), which together with Lemma~\ref{lem:vissets-bound}
will imply the running time claimed in Theorem~\ref{thm:matchings}.

\subsection{Inclusion-exclusion based algorithm}
\label{sec:in-ex-matchings}

In the following theorem we show an algorithm
computing the number of perfect matchings of
an undirected graph in $\Ohstar(2^{n/2})$ time and polynomial space,
thus matching the time and space complexity of the
algorithm by Bj\"{o}rklund~\cite{bjorklund-matchings}.

\begin{theorem}
\label{thm:in-ex-matchings}
Given an $n$-vertex graph $G=(V,E)$ in $\Ohstar(2^{n/2})$ time
an polynomial space one can count the number of perfect matchings in $G$.
\end{theorem}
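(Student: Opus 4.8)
The plan is to reduce the problem of counting perfect matchings to counting a restricted family of cycle covers on only $n/2$ vertices, and then to evaluate that count by a Ryser-type inclusion--exclusion sum. First I would dispose of the trivial case: if $n$ is odd there are no perfect matchings, so assume $n$ is even and fix an arbitrary partition of $V$ into $n/2$ pairs $\{x_1,y_1\},\dots,\{x_{n/2},y_{n/2}\}$. Contracting each pair $\{x_i,y_i\}$ to a single super-vertex $i$ turns any perfect matching $M$ of $G$ into a set of edges on the $n/2$ super-vertices in which every super-vertex has degree exactly two, since each of the two original endpoints $x_i,y_i$ is matched exactly once; this is precisely a cycle cover in the sense of Section~\ref{sec:prelim}, with self-loops corresponding to edges of $M$ lying inside a single pair. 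Conversely, a cycle cover of the contracted multigraph, together with a choice at each super-vertex of which incident edge attaches to $x_i$ and which to $y_i$, recovers a perfect matching. Thus perfect matchings of $G$ are in bijection with cycle covers of an auxiliary multigraph $H$ on $n/2$ vertices carrying port labels that record, for each edge, which of the two original endpoints it uses at each end.

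Second, I would count these port-labelled cycle covers by inclusion--exclusion over the ground set of the $n/2$ super-vertices. The guiding identity is Ryser's formula: orienting each cycle turns a cycle cover into a permutation of the super-vertices, so a suitably weighted count of cycle covers is essentially the permanent of an $(n/2)\times(n/2)$ matrix $A$ whose entry $A_{ij}$ aggregates the edges of $G$ joining pair $i$ to pair $j$, with the port information encoded as a small local weight. Expanding this permanent as Ryser's signed sum over subsets $S$ of the super-vertices yields $2^{n/2}$ terms, each a product over super-vertices of a quantity computable in polynomial time; a Gray-code traversal of the subsets updates successive terms in polynomial time and uses only polynomial space, giving the claimed $\Ohstar(2^{n/2})$ time and polynomial space bounds.

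The hard part is arranging the bijection so that each perfect matching is counted exactly once rather than once per orientation of each cycle: a naive permanent overcounts a cycle cover with $k$ cycles by a factor $2^k$, which is exactly the obstruction that long separated the non-bipartite case from the bipartite one. I would resolve this by breaking the orientation symmetry through the port labels---for instance insisting that each edge be traversed out of an $x$-port and into a $y$-port, or designating a canonical starting super-vertex on each cycle---and by encoding the exactness constraint (each original vertex covered once, no edge reused) so that the relaxed counts appearing in the inclusion--exclusion factorize over super-vertices and stay polynomially computable. Verifying that the resulting signed sum equals the number of perfect matchings, and checking the bookkeeping for self-loops (within-pair edges) and for two distinct edges both joining pair $i$ to pair $j$, is the delicate core of the argument; note that the averaging and bounded-degree machinery of the earlier sections, in particular Lemma~\ref{lem:vissets-bound}, is not needed here, as this algorithm is self-contained.
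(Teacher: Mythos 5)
The reduction in your first step is correct and is essentially the one the paper uses (it is the labelled contraction $G'$ of Section~\ref{sec:dp-matchings}; the paper's proof of Theorem~\ref{thm:in-ex-matchings} itself uses an equivalent arc-labelled directed encoding on $n$ vertices, with one label per pair). You also correctly isolate the real obstruction: a permanent-style count sees \emph{directed} cycle covers, and therefore counts a matching whose associated cycle cover has $k$ cycles $2^k$ times. The gap is that your proposal never actually removes this factor, and both devices you offer for doing so fail. Insisting that every edge be traversed out of an $x$-port and into a $y$-port is not a tie-breaking rule but an unsatisfiable constraint: in a non-bipartite instance an edge of $G$ may join two $x$-ports (or two $y$-ports), such an edge admits no legal orientation at all, and every matching containing it would then be counted zero times rather than once (if no such edge exists, the instance is effectively bipartite and you are back in Ryser's setting). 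Designating a canonical starting super-vertex on each cycle removes the rotational ambiguity of closed walks, not the directional one: both orientations of a cycle pass through the same lowest-numbered super-vertex, so the factor $2$ per cycle survives. What is needed is a canonical \emph{direction}, e.g.\ requiring that at the lowest-numbered super-vertex of each cycle the outgoing edge be the one attached at its $x$-port; exactly one of the two orientations satisfies this.

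Moreover, once such a per-cycle canonicity condition is imposed, your evaluation strategy breaks down: the condition refers to the minimal super-vertex of each cycle, so the count no longer factorizes as a product of row sums over super-vertices --- it is not the permanent of any fixed $(n/2)\times(n/2)$ matrix, and no local reweighting of entries can simulate division by $2^k$, because the number of cycles is a global statistic. This is exactly why the paper's algorithm is not a Ryser-type product formula: it performs inclusion--exclusion over the $n/2$ pairs (the ``labels''), and evaluates each of the $2^{n/2}$ terms by a polynomial-time dynamic program that counts $r$-tuples of canonically rooted (``nice'') closed walks of total length $n/2$, dividing by $r!$ at the end; a walk-counting dynamic program, unlike a product formula, can carry such a rooting constraint. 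Your write-up defers precisely this step to ``the delicate core of the argument,'' so what you have is a plan around the difficulty rather than a proof. (The step is indeed delicate: even the rooting condition as literally written in the paper --- a walk is nice if it visits its lowest-indexed vertex exactly once and nothing below it --- does not by itself break the direction symmetry, since the two orientations of an alternating cycle visit \emph{complementary} vertices within each pair and each orientation is nice with respect to its own minimal vertex; the condition one actually needs is that the walk visit the even-indexed vertex of its lowest \emph{pair}. With that correction the inclusion--exclusion argument goes through, and any completion of your proposal would need the same ingredient.)
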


\begin{proof}
Clearly we can assume that $n$ is even.
Consider the edges of $G$ being black and let $V=\{v_0,\ldots,v_{n-1}\}$.
Now we add to the graph a perfect matching of red edges $E_R=\{v_{2i}v_{2i+1} : 0 \le i < n/2\}$
obtaining a multigraph $G'$.
Observe that for any perfect matching $M \subseteq E$ the multiset
$M \cup E_R$ is a cycle cover (potentially with $2$-cycles), where
all the cycles are {\em alternating} - that is when we traverse
each cycle of $M \cup E_R$, the colors alternate (in particular, they have even length).
Moreover, for any cycle cover $Y$ of $G'$ composed
of alternating cycles the set $Y \setminus E_R$
is a perfect matching in $G$.
This leads us to the following observation.

\begin{observation}
The number of perfect matchings in $G$ equals
the number of cycle covers in $G'$ where each cycle is alternating.
\end{observation}

Now we create a directed multigraph graph $G''$ with arcs labeled with elements of $L=\{\ell_0,\ldots,\ell_{n/2-1}\}$,
having $n$ vertices and $2m$ arcs, where $m=|E|$ is the number of black edges of $G'$.
Let $\{v_0'',\ldots,v_{n-1}''\}$ be the set of vertices of the graph $G''$.
For each black edge $v_av_b$ of $G'$ we add to $G''$ two following arcs:
\begin{itemize}
  \item $(v_{a \oplus 1}'',v_b'')$ labeled $\ell_{\lfloor a/2 \rfloor}$,
  \item and $(v_{b \oplus 1}'',v_a'')$ labeled $\ell_{\lfloor b/2 \rfloor}$.
\end{itemize}
By $\oplus$ we denote the XOR operation, that is, 
for any $0 \le x < n$ the vertex $v_{x \oplus 1}$ is the other
endpoint of the red edge of $G'$ incident to $v_x$.

\begin{observation}
The number of cycle covers in $G'$ where each cycle is alternating
equals the number of sets of cycles in $G''$ of total length $n/2$, where
each label $\ell_i$ (for $0 \le i < n/2$) is used exactly once.
\end{observation}



We are going to compute the of sets of cycles in $G''$ where each label is used exactly
once using the inclusion-exclusion principle.

For a vertex $v_a''$ of $G''$, we say that a closed walk $C$ is {\em{$v_a''$-nice}}
if $C$ visits $v_a''$ exactly once and does not visit any vertex $v_b''$ for $b < a$.
A closed walk is {\em{nice}} if it is $v_a''$-nice for some $v_a''$; note that, in this case,
the vertex $v_a''$ is defined uniquely.
For a positive integer $r$ let us define the universe $\Omega_r$ as 
the set of $r$-tuples, where each of the $r$ coordinates
contains a nice closed walk in $G''$ and the total length of all the walks equals $n/2$.
For $0 \le i < n/2$ let $A_{r,i} \subseteq \Omega_r$ be the 
set of $r$-tuples, where at least one walk
contains an arc labeled $\ell_i$.
Note that by the observations we made so far
the number of perfect matchings in $G$ equals
$\sum_{1 \le r \le n/2} |\bigcap_{0 \le i < n/2} A_{r,i}| / r!$,
as the tuples in $\Omega_r$ are ordered and in any tuple of $\bigcap_{0 \le i < n/2} A_{r,i}$ all walks are pairwise different.
Therefore from now on we assume $r$ to be fixed.
By the inclusion-exclusion principle 
$$\left| \bigcap_{0 \le i < n/2} A_{r,i} \right| =  \sum_{I \subseteq \{0,\ldots,n/2-1\}} (-1)^{|I|} \left| \bigcap_{i \in I} (\Omega_r \setminus A_{r,i})\right| \,$$
hence to prove the theorem it is enough to compute the value $|\bigcap_{i \in I} (\Omega_r \setminus A_{r,i})|$
for a given $I \subseteq \{0,\ldots,n/2-1\}$ in polynomial time.
Let $G''_I$ be the graph $G''$ with all the arcs with a label from $L_I=\{\ell_i : i\in I\}$ removed.
Let $p_{a,j}$ be the number of $v_a''$-nice closed walks in $G''_I$ of length $j$.
Note that the value $p_{a,j}$ can be computed in polynomial time by standard dynamic programming algorithm,
filling in a table $t_p[b][i]$, $a \leq b < n/2, 0 \leq i < j$, where $t_p[b][j]$ is the number of walks $W$ from
$v_a''$ to $v_b''$ in $G''$ of length $i$ that visit $v_a''$ only once and does not visit any vertex $v_c''$ for $c < a$.

Finally, having the values $p_{a,j}$ is enough to compute $|\bigcap_{i \in I} (\Omega_r \setminus A_{i})|$
by the standard knapsack type dynamic programming. That is, we fill in a table
$t[q][i]$, $0 \leq q \leq r$, $0 \leq i \leq n/2$, where $t[q][i]$ is the number of $q$-tuples
of nice closed walks in $G''_I$ of total length $i$.

\end{proof}

\subsection{Dynamic programming based algorithm}
\label{sec:dp-matchings}

To prove Theorem~\ref{thm:matchings} we want to reformulate 
the algorithm from Section~\ref{sec:in-ex-matchings},
to use dynamic programming instead of the inclusion exclusion principle.
This causes the space complexity to be exponential, however
it will allow us to use Lemma~\ref{lem:vissets-bound} to
obtain an improved running time for bounded average degree graphs.

Assume that we are given an $n$-vertex undirected graph $G=(V,E)$, where $n$
is even, and we are to count the number of perfect matchings in $G$.
We are going to construct an undirected multigraph $G'$ having only $n/2$
vertices, where the edges of $G'$ will be labeled with unordered pairs
of vertices of $G'$, i.e. with edges of $G$.
As the set of vertices of $G'=(V',E')$ we take $V'=\{v_0',\ldots,v_{n/2-1}'\}$.
For each edge $v_av_b$ of $G$ we add to $G'$ exactly one edge:
$v_{\lfloor a/2 \rfloor}'v_{\lfloor b/2 \rfloor}'$ labeled with $\{v_a, v_b\}$.
For an edge $e' \in E'$ by $\ell(e')$ let us denote the label of $e'$.
Note that $G'$ may contain self-loops and parallel edges.
Observe that if the graph $G$ is of average degree $d$, then
the graph $G'$ is of average degree $2d$.

In what follows we count the number of particular cycle
covers of $G'$, where we use the labels of edges
to make sure that a cycle going through a vertex $v_i' \in V'$
never uses two edges of $G'$ corresponding to two edges of $G$ incident to the same vertex.

\begin{lemma}
The number of perfect matchings in $G$
equals the number of cycle covers $C \subseteq E'$ of $G'$,
where $\bigcup_{e \in C} \ell(e) = V$.
\end{lemma}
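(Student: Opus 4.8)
The plan is to establish a bijection between perfect matchings of $G$ and the cycle covers $C \subseteq E'$ satisfying the label-covering condition $\bigcup_{e \in C} \ell(e) = V$. Recall that each edge $v_a v_b$ of $G$ corresponds to exactly one edge $v'_{\lfloor a/2\rfloor} v'_{\lfloor b/2\rfloor}$ of $G'$ carrying the label $\{v_a, v_b\}$, and that each vertex $v'_i$ of $G'$ arises from the pair $\{v_{2i}, v_{2i+1}\}$ of vertices of $G$. The key structural point is that the label of an edge of $G'$ incident to $v'_i$ records \emph{which} of the two vertices $v_{2i}, v_{2i+1}$ that edge ``uses'' at the $v'_i$ end. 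The plan is to read off a matching from the labels of a cycle cover, and conversely to build a cycle cover from a matching.

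First I would set up the forward map. Given a perfect matching $M \subseteq E$ of $G$, for each matching edge $v_a v_b \in M$ take the corresponding edge of $G'$; call the resulting edge set $C_M \subseteq E'$. Every vertex $v_i$ of $G$ is matched exactly once, so for each $i$ the two vertices $v_{2i}$ and $v_{2i+1}$ are each covered by exactly one edge of $M$, contributing exactly two edge-endpoints at $v'_i$ in $C_M$; hence $\deg_{C_M}(v'_i) = 2$ for every $i$ (a self-loop at $v'_i$, coming from the edge $v_{2i}v_{2i+1} \in M$, correctly contributes $2$ under our degree convention), so $C_M$ is a cycle cover. Moreover the labels appearing are exactly $\{\{v_a,v_b\} : v_a v_b \in M\}$, and since $M$ is a perfect matching these labels partition $V$; in particular $\bigcup_{e \in C_M}\ell(e) = V$. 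So $C_M$ is a cycle cover of the required type.

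Next I would exhibit the inverse. Given a cycle cover $C$ with $\bigcup_{e\in C}\ell(e)=V$, let $M_C = \{\,\ell(e) : e \in C\,\}$, viewing each label as an edge of $G$. The crucial claim is that $M_C$ is a perfect matching. The degree-two condition at $v'_i$ means exactly two edge-endpoints of $C$ meet $v'_i$; each such endpoint, via its label, names one of $v_{2i}, v_{2i+1}$. I must argue that these two endpoints name the two \emph{distinct} vertices $v_{2i}$ and $v_{2i+1}$, which is where the label-covering hypothesis does the work: if both endpoints named the same vertex, say $v_{2i}$, then $v_{2i+1}$ would be named by no endpoint at $v'_i$ and hence (since only edges incident to $v'_i$ can carry a label mentioning $v_{2i+1}$) by no label in $C$ at all, contradicting $\bigcup_{e\in C}\ell(e)=V$. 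Thus each vertex of $G$ is covered exactly once by the labels, so $M_C$ is a perfect matching, and the two maps are mutually inverse, giving the bijection.

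The main obstacle is the injectivity/consistency issue lurking in the inverse map, namely handling the multigraph subtleties: $G'$ may have self-loops and parallel edges, and a cycle cover is a \emph{set} of edges that may traverse parallel edges or take a self-loop, so I must be careful that distinct edges of $G'$ always carry distinct labels (they do, since each edge of $G$ gives exactly one edge of $G'$ with its own label) and that the degree bookkeeping at each $v'_i$ correctly matches the two-vertices-of-$G$ accounting — the place where the convention ``a self-loop contributes $2$ to the degree'' must be invoked to make the counting come out right. Once this endpoint-counting is pinned down, the forward and backward constructions are visibly inverse to each other and the claimed equality of cardinalities follows.
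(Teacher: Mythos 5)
Your proof is correct and follows essentially the same route as the paper's: the identical bijection sending a matching $M$ to its corresponding edge set in $G'$ and, conversely, a label-covering cycle cover $C$ to $\{\ell(e) : e \in C\}$. The only difference is that you explicitly verify the steps the paper dismisses as ``clear'' (the degree-two count at each $v'_i$, including the self-loop convention, and the local argument that the covering condition forces the two endpoints at $v'_i$ to name $v_{2i}$ and $v_{2i+1}$ distinctly), which is a welcome elaboration rather than a different approach.
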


\begin{proof}
We show a bijection between perfect matchings in $G$
and cycle covers $C$ of $G'$ satisfying the condition $\bigcup_{e \in C} \ell(e) = V$.

Let $M$ be a perfect matching in $G$. As $f(M)$ we define $f(M) = \{v_{\lfloor a/2 \rfloor}'v_{\lfloor b/2 \rfloor}' : v_av_b \in M\}$.
Note that $f(M)$ is a cycle cover and moreover $\bigcup_{e \in f(M)} \ell(e) = V$.
In the reverse direction, for a cycle cover $C \subseteq E'$ of $G'$,
consider a set of edges $h(C)$ defined as $h(C)=\{\ell(e) : e \in C\}$.
Clearly the condition $\bigcup_{e \in C} \ell(e) = V$ implies that $h(C)$ is a perfect matching,
and moreover $h=f^{-1}$.  
\end{proof}

Observe, that if a cycle cover $C \subseteq E'$ of $G'$
does not satisfy $\bigcup_{e \in C} \ell(e) = V$, then
there is a vertex $v_i' \in V'$, such that the two edges
of $C$ incident to $v_i'$ do not have disjoint labels.
Intuitively this means we are able to verify the condition $\bigcup_{e \in C} \ell(e) = V$
locally, which is enough to derive the following dynamic programming routine.

\begin{lemma}
\label{lem:matchings-dp}
Once can compute the number of cycle covers $C$ of $G'$ satisfying 
$\bigcup_{e \in C} \ell(e) = V$ in \linebreak $\Ohstar(\sum_{s,t \in V}|\vissets(G',s,t)|)$ time and space.
\end{lemma}

\begin{proof}
An {\em{ordered $r$-cycle cover}} of a graph $H$ is a tuple of $r$ cycles in $H$, whose union is a cycle cover of $H$.
As each cycle cover of $H$ that contains exactly $r$ cycles can be ordered into exactly $r!$ different ordered $r$-cycle covers,
it is sufficient to count,
for any $1 \leq r \leq n/2$,
the number of ordered $r$-cycle covers $C$ in $G'$ such that each two edges in $C$ have disjoint labels.
In the rest of the proof, we focus on one fixed value of $r$.

For $0 \leq q \leq r$ and $X \subseteq V'$ as $t[q][X]$ let us define the number of ordered $q$-cycle covers in $G'[X]$
where each two edges have disjoint labels; note that $t[r][V']$ is exactly the value we need.
Moreover for $0 \leq q < r$, $X \subseteq V'$, $v_a',v_b' \in X$, $a < b$ and $x\in \{v_{2b},v_{2b+1}\}$ 
as $t_2[q][X][v_a'][v_b'][x]$ we define the number of pairs $(C,P)$ where
\begin{itemize}
  \item $C$ is a ordered $q$-cycle cover of $G'[Y]$ for some $Y \subseteq X \setminus \{v_a',v_b'\}$;
  \item $P$ is a $v_a'v_b'$-path with vertex set $X \setminus Y$ that does not contain any vertex $v_c'$ with $c < a$;
  \item any two edges of $C \cup P$ have disjoint labels;
  \item the label of the edge of $P$ incident to $v_a'$ contains $v_{2a}$;
  \item the label of the edge of $P$ incident to $v_b'$ contains $x$.
\end{itemize}

Note that we have the following border values: $t[0][\emptyset]=1$ and $t[0][X]=0$ for $X \neq \emptyset$. 

Consider an entry $t_2[q][X][v_a'][v_b'][x]$, and let $(C,P)$ be one of the pairs counted in it.
We have two cases: either $P$ is of length $1$ or longer. The number of pairs $(C,P)$
in the first case equals $t[q][X \setminus \{v_a',v_b'\}] \cdot |\{v_a'v_b' \in E': \ell(v_a'v_b') = \{v_{2a},x\}\}|$.
In the second case, let $v_c'v_b'$ be the last edge of $P$; note that $c > a$ by the assumptions on $P$. The label of
$v_c'v_b'$ equals $\{v_{2c},x\}$ or $\{v_{2c+1},x\}$.
Thus, the number of elements $(C,P)$ in the second case equals
$\sum_{v_c' \in X \setminus \{v_a',v_b'\}} \sum_{y \in \{v_{2c},v_{2c+1}\}} t_2[q][X \setminus \{v_b'\}][v_a'][v_c'][y \oplus 1]  \cdot |\{v_c'v_b' \in E': \ell(v_c'v_b') = \{y,x\}\}|$,
where for $y=v_{r}'$ we define $y \oplus 1=v_{r \oplus 1}'$.


Let us now move to the entry $t[q][X]$ and let $C$ be an ordered $q$-cycle cover in $G'[X]$. Again, there are two cases:
either the last cycle of $C$ (henceforth denoted $W$) is of length $1$ or longer. The number of the elements $C$ of the first type
equals $\sum_{v_a' \in X} t[q-1][X \setminus \{v_a'\}] \cdot |\{v_a'v_a' \in E'\}|$.
In the second case, let $v_a'$ be the lowest-numbered
vertex on $W$ and let $e=v_a'v_b'$ be the edge of $W$ where $v_{2a+1} \in \ell(e)$.
Note that both $v_a'$ and $e$ are defined uniquely; moreover, $a < b$ and no vertex $v_c'$ with $c < a$ belongs to $W$.
Thus the number of elements $C$ of the second type
equals $\sum_{v_a',v_b' \in X, a < b} \sum_{x \in \{v_{2b}, v_{2b+1}\}} \, t_2[q-1][X][v_a'][v_b'][x \oplus 1] \cdot |\{v_a'v_b' \in E': \ell(v_a'v_b') = \{v_{2a+1},x\}\}|$.

So far we have given recursive formulas, that allow computing the entries
of the tables $t$ and $t_2$.
However the values $t[q][X]$, $t_2[q][X][v_a'][v_b'][x]$ for $X \not\in \bigcup_{s,t \in V'} \vissets(G',s,t)$
are equal to zero.
The last step of the proof is to show how to perform the dynamic programming computation
in a time complexity within a polynomial factor from the number of non-zero entries of the table.
We do that in a bottom-up manner, that is iteratively, for each $q=1,2,\ldots,r$, for each $i=1,2,\ldots,n$, 
we want to compute the values of non-zero entries $t[q][X]$ for all sets $X$ of cardinality $i$
and then compute the values of non-zero entries $t_2[q][X][*][*][*]$ for all sets $X$ of cardinality $i$.
Having the non-zero entries for the pairs $(q',i')$ where $q' < q$, $i' \le i$
one can compute the list of non-zero entries $t[q][X]$ for $|X|=i$ by investigating
to which recursive formulas the non-zero entries for $(q',i')$ contribute to.
Analogously having the non-zero entries for the pairs $(q',i')$ where $q' \le q$, $i' < i$
we generate the non-zero entries $t_2[q][X][*][*][*]$ for $|X|=i$, which finishes the proof of the lemma.

\end{proof}
Theorem~\ref{thm:matchings} follows directly from the Lemma~\ref{lem:vissets-bound}
together with Lemma~\ref{lem:matchings-dp}.

\section{Counting Perfect Matchings in Bipartite Graphs}
\label{sec:bip-matchings}

In this section we prove Theorem~\ref{thm:bip-matchings}, i.e. show an algorithm counting
the number of perfect matchings in bipartite graphs 
of average degree $d$ in $\Ohstar(2^{(1-1/(3.55d))n/2})$ time,
improving and simplifying the algorithm of Izumi and Wadayama~\cite{japonce}.

Let $G=(V = A \cup B, E)$ be a bipartite graph, where $|A| = |B| = n/2$, and denote $k=n/2$.
Note that we may assume that each vertex in $G$ is of degree at least $2$,
as an isolated vertex causes no perfect matching to exist, while
a vertex of degree $1$ has to be matched to its only neighbour, hence we can reduce
our instance in that case.
Therefore we assume $d \ge 2$.

Let $B_0 \subseteq B$ be a subset containing $\lfloor k/(\alpha d) \rfloor$ vertices
of smallest degree in $B$, where $\alpha \ge 2$ is a constant to be determined later.
Moreover let $A_0 = N(B_0)$ and observe that $|A_0| \le k/\alpha$,
as vertices of $B_0$ are of average degree at most $d$.
We order vertices of $A$, i.e. denote $A = \{a_1,\ldots,a_{k}\}$, 
so that vertices of $A \setminus A_0$ appear before vertices of $A_0$.
In particular for any $1 \le i \le k(1-1/\alpha)$ we have $N(a_i) \cap B_0 = \emptyset$.

Consider the following standard dynamic programming approach.
For $X \subseteq B$ define $t[X]$ as the number
of perfect matchings in the subgraph of $G$ induced by $\{a_1,\ldots,a_{|X|}\} \cup X$.
Having this definition the number of perfect matchings in $G$ equals $T[B]$.
Observe that the following recursive formula allows to compute the entries of the table $t$,
where we sum over the vertex matched to $a_{|X|}$:
$$t[X] = \sum_{v \in N(a_{|X|}) \cap X} t[X \setminus \{v\}]\,,$$
where $t[\emptyset]$ is defined as $1$.

Let us upper bound the number of sets $X$, for which $t[X]$ is non-zero.
If $|X| \le (1-1/\alpha)k$ and $t[X]>0$, then $X \cap B_0 = \emptyset$,
as otherwise each vertex of $X \cap B_0$ is isolated in $G[\{a_1,\ldots,a_{|X|}\} \cup X]$.
Consequently there are at most $2^{k-\lfloor k /(\alpha d) \rfloor} \le 2^{1+(1-1/(\alpha d))k}$
sets $X$ with $t[X] > 0$ of cardinality at most $(1-1/\alpha)k$.
At the same time there are at most $k \binom{k}{\lceil k/\alpha \rceil}$ sets
of cardinality greater than $(1-1/\alpha)k$.
By using the binary entropy function, we get $\binom{k}{\lceil k/\alpha \rceil} = \Ohstar(2^{H(1/\alpha)k})$,
where $H(p) = -p \log_2 p - (1-p) \log_2(1-p)$.
For $d \ge 2$ and $\alpha = 3.55$ we have $2^{H(1/\alpha)} \le 2^{1-1/(\alpha d)}$.
Consequently if we skip the computation of values $t[X]$ for sets $X$ of
cardinality at most $(1-1/\alpha)k$, such that $X \cap B_0 \neq \emptyset$,
we obtain the claimed running time, which finishes the proof of Theorem~\ref{thm:bip-matchings}.

Note that the constant $\alpha=3.55$ can be improved if we have a stronger lower bound
on $d$. However, in our analysis it is crucial that $\alpha > 2$.

\section{Conclusions and open problems}

We would like to conclude with two open problems that arise from our work.
First, can our ideas be applied to obtain an $\Ohstar(2^{(1-\eps)n})$ time algorithm
for computing the chromatic number of graphs of bounded average degree?
For graphs of bounded maximum degree such an algorithm is due to Bj\"{o}rklund et al. \cite{trimmed-fsc}.

Second, can we make a similar improvements as in our work if only polynomial space is allowed?
To the best of our knowledge, this question remains open even in graphs of bounded maximum degree.

\bibliographystyle{splncs03}
\bibliography{ave-degree}

\end{document}